\newcommand{\Adv}{{\rm Adv}}
\newcommand{\Av}{{\mathcal{A}}}
\newtheorem{defn}{Definition}
\newcommand{\address}{\ensuremath{{\rm address}}}
\newcommand{\mk}{\ensuremath{{\rm mk}}}
\newcommand{\RO}{\ensuremath{{\rm RO}}}
\begin{document}
\title{Advanced attribute-based protocol based on the modified secret sharing scheme}
\author
{M.A. Kudinov$^{1,2,3}$\and
A.A.~Chilikov$^{3,4}$\and
E.O. Kiktenko$^{1,2}$\and
A.K. Fedorov$^{1,2}$
}

\authorrunning{M.A. Kudinov et al.}
\titlerunning{Advanced attribute-based protocol}
\institute{$^1$Russian Quantum Center, Skolkovo, Moscow 143025, Russia\\ $^2$ QApp, Skolkovo, Moscow 143025, Russia \\ $^3$Bauman Moscow State Technical University, Moscow 105005, Russia \\ $^4$ Moscow Institute of Physics and Technology, Dolgoprudny, Moscow Region 141700, Russia \\
\email{mishel-kudinov@mail.ru; chilikov@passware.com; e.kiktenko@rqc.ru; akf@rqc.ru}}

\maketitle
\begin{abstract}
We construct a new protocol for attribute-based encryption with the use of the modification of the standard secret sharing scheme. 
In the suggested modification of the secret sharing scheme, only one master key for each user is required that is achieved by linearly enlarging public parameters in the access formula.
We then use this scheme for designing an attribute-based encryption protocol related to some access structure in terms of attributes. 
We demonstrate that the universe of possible attributes does not affect the resulting efficiency of the scheme.
The security proofs for both constructions are provided.
\keywords{secret sharing \and attribute-based encryption \and monotone access structures}
\end{abstract}

\begin{abstract}
We construct a new protocol for attribute-based encryption with the use of the modification of the standard secret sharing scheme. 
In the suggested modification of the secret sharing scheme, only one master key for each user is required that is achieved by linearly enlarging public parameters in access formula.
We then use this scheme for designing an attribute-based encryption protocol related to some access structure in terms of attributes. 
We demonstrate that the universe of possible attributes does not affect the resulting efficiency of the scheme.
The security proofs for both constructions are provided.
\keywords{Secret sharing \and Attribute-based encryption \and Monotone access structures}

\end{abstract}

\section{Introduction} \label{intro}
In the view of the significant increase in the amount of digital communications, the problem of efficient protection of data becomes crucial.
An important task is to construct a secured protocol for controlled access to data.
In standard protocols for solving this problem, which are mostly based on public-key cryptography, a secret key is required for access to whole encrypted data. 
A straightforward modifications of such protocols for providing partial access to data lead to a significant increase of the complexity since multiple encryptions of the same data are needed. 

Attribute-based encryption (ABE) is a relatively new approach for solving the data access control problem~\cite{Goyal2006,Boneh2001,Boneh2005}.
In the ABE schemes, the access to the parts of an encrypted data is determined by a set of \emph{attributes}, which are inherent to various participants.
Thus, if attributes of a participant belonging to a particular subset of possible attributes, then he is able to obtain access to a corresponding particular part of the encrypted data.
The ABE conception appears to be very promising in a framework of cloud technologies and distributed ledgers.
Over the past decade, a number of modifications and improvements have been presented~\cite{Goyal2006,Sahai2005,Abdalla2006}.
However, some of the proposed approaches still suffer from implementation
complexity, which increases with the number of attributes.

We note that the concept of ABE has much in common with the secret sharing (SS) problem. 
However, one of the most common SS schemes~\cite{Benaloh1990} 
has a problem related to a large number of shares per trustee. 

In this work, we propose an advanced ABE protocol with a sufficiently low computational complexity.
One of the main techniques of our work is a modification of the standard SS scheme, which allows one to use a single key for generating the whole set of required shares. 
This modification is then used for the construction of the ABE protocol, which is independent to the size of the set of possible attributes.

The paper is organized as follows.
In Sec.~\ref{sec:prel} we provide basic definitions.
In Sec.~\ref{sec:stand} we briefly describe the standard construction of the general SS scheme.
In Sec.~\ref{sec:advss} we present our modification of the SS scheme, which is then used for constructing the ABE protocol in Sec.~\ref{sec:advabe}.
In Sec.~\ref{sec:efficiency} we estimate the required resources for encryption and decryption for the suggested protocol. 
We summarize our results and conclude in Sec.~\ref{sec:conc}.

\section{Preliminaries} \label{sec:prel}
Let us introduce basic definitions and notations.

Let $x \leftarrow \mathcal{X}$, where $x$ is a random value and $\mathcal{X}$ is a probability distribution, denote a sampling of $x$ from the distribution $\mathcal{X}$. 
Let $y \leftarrow M(x)$, where $M$ is an algorithm, denote the output $y$ of $M$ processed on the input $x$.
Let $x \stackrel{\$}{\leftarrow} X$, where $X$ is a set, denote an element $x$, which is chosen uniformly at random from the set $X$.
Let $\lor(\phi_1, \ldots, \phi_n)$ and $\land (\phi_1, \ldots, \phi_n)$ stand for $\phi_1 \lor \ldots \lor \phi_n$ and $\phi_1 \land \ldots \land \phi_n$, correspondingly.

\newcommand{\rand}{\stackrel{\$}{\leftarrow}}

Now we define a pseudorandom function (PRF) family.  
Given the oracle $f$, we denote $M(f)$ as the execution of the oracle machine $M$ with an access to $f$.

\begin{defn}[pseudorandom function (PRF) family]
    We define $\mathbb{F}_{\mathcal{D}\rightarrow\mathcal{E}}=\{f_k: \mathcal{D} \rightarrow \mathcal{E}\}_{k \in \mathcal{K}}$, where $|\mathcal{K}| = |\mathcal{D}| = |\mathcal{E}| < \infty$ 
    to be a function family.
    We define the advantage of an adversary $\mathcal{A}$ against PRF as
    \begin{equation*}   	
    		{\rm Adv_{\mathbb{F}_{\mathcal{D}\rightarrow\mathcal{E}}}^{PRF}(\mathcal{A})} = |\Pr[1 \leftarrow \mathcal{A}(f_k): k \stackrel{\$}{\leftarrow} \mathcal{K}] - \Pr[1 \leftarrow \mathcal{A}(h): h \stackrel{\$}{\leftarrow} H_{\mathcal{D}\rightarrow \mathcal{E}}]| ,
    \end{equation*}
    where $H_{\mathcal{D}\rightarrow\mathcal{E}} $ is a family of all functions from $\mathcal{D} \rightarrow \mathcal{E}$ ($|H_{\mathcal{D}\rightarrow\mathcal{E}}|=|\mathcal{E}|^{|\mathcal{D}|}$).
    We define the PRF insecurity of a function family $\mathbb{F}_{\mathcal{D}\rightarrow\mathcal{E}}$
    against time-$\xi$ adversaries as the maximum advantage of any
classical adversary that runs in time $\xi$ :
\begin{equation*}
    {\rm InSec^{PRF}(\mathbb{F}_{\mathcal{D}\rightarrow\mathcal{E}}, \xi)
     = \underset{\mathcal{A}}{max} \{Adv_{\mathbb{F}_{\mathcal{D}\rightarrow\mathcal{E}}}^{PRF}(\mathcal{A})\}}
\end{equation*}
\end{defn}

\begin{defn}[$m$-PRF family game]
    We say that an oracle $\omega$ is initialized with a function $f(\cdot)$ if $\omega(x)=f(x)$, and denote it as $\omega \leftarrow f$.
    The following procedure is called $m$-PRF family game
    \begin{description}
        \item[Init:] Given $\mathbb{F}_{\mathcal{D}\rightarrow\mathcal{E}}=\{f_k: \mathcal{D} \rightarrow \mathcal{E}\}_{k \in \mathcal{K}}$,
        where $|\mathcal{K}| = |\mathcal{D}| = |\mathcal{E}|$, flip a fair coin $b$.
        If $b=1$ then $\Omega=\{\omega_i \leftarrow f_k : k\stackrel{\$}{\leftarrow}\mathcal{K}, \ i\in \{1, \ldots, m\} \}$. 
        Otherwise $\Omega=\{\omega_i \leftarrow h : h\stackrel{\$}{\leftarrow}H_{\mathcal{D}\rightarrow\mathcal{E}}, \ i\in \{1, \ldots, m\}\}$, where $H_{\mathcal{D}\rightarrow\mathcal{E}} $ is a family of all functions from $\mathcal{D} \rightarrow \mathcal{E}$.
        \item[Game:] Given a set of oracles $\Omega$, the challenge is to distinguish whether $\Omega$ is initialized with functions from $F_{\mathcal{D}\rightarrow\mathcal{E}}$ or from $H_{\mathcal{D}\rightarrow\mathcal{E}}$
    \end{description}
    We define the advantage of an adversary $\mathcal{A}$ against $m$-PRF as
    \begin{equation*}
        {\rm Adv}_{\mathbb{F}_{\mathcal{D}\rightarrow\mathcal{E}}}^{m-{\rm PRF}}(\mathcal{A}) = |\Pr[1\leftarrow \mathcal{A}(\Omega)| b=1] - \Pr[1 \leftarrow \mathcal{A}(\Omega) | b=0]|.
    \end{equation*}
    We define the $m$-PRF insecurity of a function family $\mathbb{F}_{\mathcal{D}\rightarrow\mathcal{E}}$
    against time-$\xi$ adversaries as the maximum advantage of any
classical adversary that runs in time $\xi$ :
\begin{equation*}
    {\rm InSec}^{m-{\rm PRF}}
    (\mathbb{F}_{\mathcal{D}\rightarrow\mathcal{E}}, \xi)
     = \underset{\mathcal{A}}{\max} \{{\rm Adv}_{\mathbb{F}_{\mathcal{D}\rightarrow\mathcal{E}}}^{m-{\rm PRF}}(\mathcal{A})\}
\end{equation*}
\end{defn}

\begin{defn}[Decisional Diffie–Hellman (DDH) challenge~\cite{Boneh1998,Cramer2012}]
    Consider a (multiplicative) cyclic group $G$ of the order $q$ with the generator $g$.
    We define the advantage of an adversary $\mathcal{A}$ against DDH as
    \begin{equation}
       {\rm Adv}_{G}^{\rm DDH}(\mathcal{A}) = |\Pr(1\leftarrow\mathcal{A}(g^a, g^b, g^{ab}) - \Pr(1 \leftarrow \mathcal{A}(g^a, g^b, g^z))| 
    \end{equation}
    where $a,b,z$ are chosen randomly and independently from $\mathbb{Z}_q$.
    We define the DDH insecurity of a group $G$
    against time-$\xi$ adversaries as the maximum advantage of any
classical adversary that runs in time $\xi$ :
\begin{equation*}
    {\rm InSec^{DDH}}(G, \xi) = {\rm \underset{\mathcal{A}}{max}}
     \{{\rm Adv}_{G}^{\rm DDH}(\mathcal{A})\}
\end{equation*}
\end{defn}

\begin{defn}[$m$-DDH challenge]
    Consider a (multiplicative) cyclic group $G$ of the order $q$ with the generator $g$, and following two distibutions:    
    \begin{itemize}
        \item $\Omega_{ab}=\{(g^a, g^{b_1}, g^{a\cdot b_1}), (g^a, g^{b_2}, g^{a\cdot b_2}), \ldots ,(g^a, g^{b_m}, g^{a\cdot b_m})\} $, where $a$ and $b_i$ are chosen randomly and independently from $\mathbb{Z}_q$ for $i = 1, \ldots, m$,
        \item $\Omega_{z} =\{(g^a, g^{b_1}, g^{z_1}), (g^a, g^{b_2}, g^{z_2}), \ldots , (g^a, g^{b_m}, g^{z_m})\}$, where $a, b_i , z_i$ are chosen randomly and independently from $\mathbb{Z}_q$ for $i = 1, \ldots, m$,
    \end{itemize}
    We define the advantage of an adversary $\mathcal{A}$ against $m$-DDH as
    \begin{equation}
        {\rm Adv}_{G}^{ m-{\rm DDH}}(\mathcal{A}) = |\Pr[1\leftarrow\mathcal{A}(\Omega_{ab})] - \Pr[1 \leftarrow \mathcal{A}(\Omega_{z})]| 
     \end{equation}
     We define the DDH insecurity of a group $G$
    against time-$\xi$ adversaries as the maximum advantage of any
classical adversary that runs in time $\xi$ :
\begin{equation*}
    {\rm InSec}^{m-{\rm DDH}}(G, \xi)
     ={\rm \underset{\mathcal{A}}{max}} \{{\rm Adv}_{G}^{ m-{\rm DDH}}(\mathcal{A})\}
\end{equation*}

    \end{defn}

\begin{defn}Let $\mathcal{P} = \{P_1, \ldots P_n\}$ be a set. 
    An access structure $\mathcal{B}$ is a collection of non-empty subsets of $\mathcal{P}$, i.e., $\mathcal{B} \subseteq 2^{\mathcal P}$.
\end{defn}

\begin{defn}
    Given a set $\mathcal{P}$, a monotone access structure on $\mathcal{P}$ is a collection of subsets $\mathcal{B} \subseteq 2^{\mathcal P}$
    such that
    \begin{equation*}
        B \in \mathcal{B}, B \subseteq B' \subseteq \mathcal{P} \Rightarrow B' \in \mathcal{B}.
    \end{equation*}
\end{defn}

\begin{defn}
    A Boolean function $\Phi:\{0,1\}^n\rightarrow \{0,1\}$ is called monotone, if $\Phi(x_1, \ldots, x_n) \leq \Phi(x_1', \ldots, x_n')$, whenever for every $i\in\{1,\ldots,n\}$ $x_i \leq x_i'$.
\end{defn}

\begin{defn}
    Given an access structure $\mathcal{B}$, define a Boolean function $\Phi_\mathcal{B}: \{0,1\}^{|P|} \rightarrow \{0,1\}$ on $|\mathcal{P}|$-bit strings, where each bit is indexed by an element from $\mathcal{P}$,
    such that $\Phi(x)=1$ iff $\{p: x_p=1\} \in \mathcal{B}$. 
\end{defn}

One can look at the Boolean function $\Phi_\mathcal{B}$ as an indicator of the set $\mathcal{B}$.
It is easy to check that the defined $\Phi_{\mathcal{B}}$ is a monotone Boolean function for a proper monotone access structure $\mathcal{B}$. 

\begin{defn}
    For a given set $\mathcal{P}$ and a monotone access structure $\mathcal{B}$ on $\mathcal{P}$,
    define $\mathcal{F(B)}$ to be the set of all Boolean formulae
    (expressions consisted of logical operations) on $|P|$ variables,
    such that for every formula $\phi \in \mathcal{F(B)}$ the output of $\phi$
    is true iff the true variables in $\phi$ correspond exactly
    to a set $B \in \mathcal{B}$ 
    (here we assume that each Boolean variable in the formula is indexed with an element form $\mathcal{P}$).
\end{defn}

We note that $\phi$, $\phi' \in \mathcal{F(B)}$ implies that $\phi$ and $\phi'$ correspond to the same function $\Phi_{\mathcal B}$. 
They may, however, represent entirely different formula to express this function.

\begin{defn}[Random oracle~\cite{Bellare1993}]
    Random oracle is an oracle (a theoretical black box) that responds to every unique query with a value chosen uniformly at random from its output domain. 
    If a query is repeated, it responds the same way every time that query is submitted.
    We refer a set of independent Random Oracles, $\{{\rm RO}_1, \ldots, {\rm RO}_t\}$, as a family of Random Oracles.
\end{defn}

\section{Standard construction of the general SS scheme} \label{sec:stand}

We begin our consideration with a SS scheme, which is proposed by  J. Benaloh and J. Leichter~\cite{Benaloh1990}, that we refer to as a \emph{standard SS scheme}.
For this purpose we first introduce a definition of the \textit{secure generalized SS scheme}.
It is known that for certain access structures every secure generalized SS scheme must be able to assign multiple shares to each trustee (see Theorem~\ref{thm:2} below). 
In this case, we use $s_{p,j}$ to denote the $j^{\rm th}$ share given to trustee $p$.

We define the scheme with the use of the following roles.
We call the \emph{dealer}, a user who shares a secret according to some access structure. 
The \emph{trustees} are users among which the secret is shared. 
A \emph{party} is a group of trustees. 
We denote the set of all trustees as $\mathcal{P}$.

\begin{defn}[Secure generalized SS scheme]
    Given a monotone access structure $\mathcal{B}$ on a set of trusties
     $\mathcal{P}$ and a set of possible secrets $\mathcal{S}$,
      a secure generalized SS scheme for $\mathcal{B}$ is a method of dividing a secret $s\in\mathcal{S}$ into shares $\{s_{p,j}\}_{p\in\mathcal{P},j\in \mathbb{N}}$  such that
    \begin{itemize}
        \item for every $B \in \mathcal{B}$, there is an algorithm for reconstructing the secret $s$ from the subset of shares $\underset{p\in B}{\cup} \underset{j}{\cup} s_{p,j}$;
        \item for every $B \notin \mathcal{B}$ the subset of shares $\underset{p\in B}{\cup} \underset{j}{\cup} s_{p,j}$ provides no information (in an information theoretic sense) about the value of $s$.
    \end{itemize}
\end{defn}

In what is presented below, we define the secret domain $\mathcal{S}=\mathbb{Z}_q$, for some positive integer $q$. 
We then are able to construct the secure generalized SS scheme.

It is known that every monotone function $\Phi$ can represented with a formula $\phi$ consisted only of $\land $ and $\lor $ operations (without NOT operation).
It is then sufficient to demonstrate how to divide a secret ``across" these two operators. 
We use $X_{p,j}$ to denote the $j^{\rm th}$ appearance of variable $X_p: p \in P$ in a formula $\phi$. 
We refer it as $j$-notation. 
For example, a formula $(X_1 \land X_2)\lor(X_1 \land X_3)$ transforms to $(X_{1,1} \land X_{2,1})\lor (X_{1,2} \land X_{3,1})$

Let $\$(s, \phi)$ be a random function, which declares shares for each trustee $p \in P$ for $s \in \mathcal{S}$ and a monotone formula $\phi$, that is defined as follows
(we assume that $\phi$ is represented in $j$-notation):
\begin{itemize}
    \item $\$(s',X_{p,j})$ assigns the share $s'$ to trustee $p\in P$, such that $s_{p,j} := s'$;
    \item $\$(s',\lor(\phi_1, \ldots ,\phi_n)) =\underset{1\leq i \leq n}{\cup}\$(s,\phi_i)$;
    \item $\$(s,\land(\phi_1, \ldots ,\phi_n)) = \underset{1\leq i \leq n}{\cup}\$(s_i,\phi_i)$, where the $s_i$ are chosen uniformly from $\mathcal{S}$, such that $s = (\sum_{i=1}^n{s_i}) ({\rm mod}\ q)$.
\end{itemize}

It is then possible to show that for every monotone access structure $\mathcal{B}$, the SS scheme defined by $\$(s,\phi)$ satisfies the definition of a secure generalized SS scheme.

\begin{theorem}\label{thm:1}
	Let  $\mathcal{B}$ be a monotone access structure on a set $\mathcal{P}$, 
	$\phi \in \mathcal{F(B)}$ such that it is represented in $j$-notaition and contains only operators $\land $ and $\lor $, 
	and let $s$ be a secret from $\mathbb{Z}_q$. The SS scheme determined by $\$(s, \phi)$ is a secure generalized SS scheme for $\mathcal{B}$.
\end{theorem}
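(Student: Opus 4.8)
The plan is to proceed by structural induction on the formula $\phi$, following the recursive definition of $\$(\cdot,\cdot)$. For a monotone formula $\psi$ over the copied variables $X_{p,j}$ and a set $B\subseteq\mathcal{P}$, write $\psi(B)$ for the value of $\psi$ under the assignment that sets $X_{p,j}=1$ exactly when $p\in B$; since $\phi\in\mathcal{F(B)}$ computes $\Phi_\mathcal{B}$, we have $\phi(B)=1$ iff $B\in\mathcal{B}$, so it suffices to prove the two clauses of the definition with ``$B\in\mathcal{B}$'' replaced by ``$\psi(B)=1$'' for every subformula $\psi$. I would strengthen the reconstruction clause to: if $\psi(B)=1$, then $s$ is a deterministic function of the shares $\{s_{p,j}\}_{p\in B}$ produced by $\$(s,\psi)$. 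I would strengthen the privacy clause to the information-theoretic statement that best supports the induction: if $\psi(B)=0$, then the joint distribution of $\{s_{p,j}\}_{p\in B}$ produced by $\$(s,\psi)$ is identical for every choice of secret $s\in\mathbb{Z}_q$. This ``distribution independent of $s$'' form is exactly what makes the $\lor$ and $\land$ steps compose.

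First I would dispatch the base case $\psi=X_{p,j}$: the only share is $s_{p,j}=s$ given to $p$, so if $p\in B$ the secret is read off directly, and if $p\notin B$ then $B$ receives no shares at all, trivially independent of $s$. For reconstruction, the $\lor$ step is immediate: if $\lor(\psi_1,\dots,\psi_n)(B)=1$, then some $\psi_i(B)=1$, and since $\$(s,\psi_i)$ shares the same $s$, the inductive reconstruction for $\psi_i$ recovers $s$. For the $\land$ step, $\land(\psi_1,\dots,\psi_n)(B)=1$ forces $\psi_i(B)=1$ for all $i$, so by induction each $s_i$ is recoverable from $B$'s shares and $s=\sum_i s_i \bmod q$ follows.

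The privacy direction is where the real content lies. For $\lor(\psi_1,\dots,\psi_n)$, an unqualified $B$ is unqualified for every $\psi_i$; because the recursive calls $\$(s,\psi_i)$ use independent internal randomness, $B$'s view is a product over $i$ of distributions each (by induction) independent of $s$, hence independent of $s$. For $\land(\psi_1,\dots,\psi_n)$, an unqualified $B$ satisfies $\psi_j(B)=0$ for at least one index $j$. I would fix such a $j$ and argue as follows: the tuple $(s_i)_{i\ne j}$ is uniform on $\mathbb{Z}_q^{\,n-1}$ and independent of $s$ (it is exactly the free part of a uniform sharing of $s$, with $s_j=s-\sum_{i\ne j}s_i$ determined), while by the inductive privacy hypothesis applied to $\psi_j$ the portion of $B$'s view coming from $\$(s_j,\psi_j)$ is independent of $s_j$. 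Hence $B$'s total view factors into a part depending only on $(s_i)_{i\ne j}$, whose law does not depend on $s$, and a part whose law does not depend on $s_j$ and uses fresh randomness; so the whole view is independent of $s$, establishing perfect secrecy.

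The main obstacle I anticipate is precisely this $\land$-privacy step: one must resist conditioning on $s_j$ (which would reintroduce dependence on $s$ through $s=\sum_i s_i$) and instead exploit that the $n-1$ complementary summands are already uniform and $s$-free, invoking the induction hypothesis only to neutralize the single block $\psi_j$ on which $B$ fails. The remaining care is bookkeeping: ensuring that each ``subformula'' and its local set of satisfying assignments are treated consistently with $j$-notation, so that setting all copies $X_{p,j}$ of a trustee simultaneously is what defines $\psi(B)$ throughout the induction.
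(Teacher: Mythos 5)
The paper does not prove Theorem~\ref{thm:1} itself but defers to Benaloh and Leichter~\cite{Benaloh1990}, and your structural induction with the strengthened invariant (deterministic reconstruction when $\psi(B)=1$, view distribution independent of $s$ when $\psi(B)=0$) is exactly the standard argument from that reference, so your proof is correct and takes essentially the same route. The one delicate point, the $\land$-privacy step, is handled properly: you condition on the $n-1$ complementary summands $(s_i)_{i\neq j}$ (which are uniform and $s$-free) and invoke the inductive hypothesis only to neutralize the single unsatisfied block $\psi_j$, rather than conditioning on $s_j$ itself.
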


Finally, we note that it is shown in~\cite{Benaloh1990} that there are access structures, which cannot be realized without giving multiple (or extra large) shares to some trustee.

\begin{theorem}\label{thm:2}
    There exist access structures for which any generalized SS scheme must give some trustee shares which are from a domain
    larger than that of the secret. 
\end{theorem}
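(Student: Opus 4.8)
The plan is to recast information-theoretic security in terms of Shannon entropy and then exhibit a concrete access structure that provably cannot be realized with share domains no larger than $\mathcal{S}=\mathbb{Z}_q$. Model the secret as a random variable $S$ distributed uniformly on $\mathbb{Z}_q$, so that (in bits) $H(S)=\log_2 q$, and for each $p\in\mathcal{P}$ let $S_p=\{s_{p,j}\}_j$ denote the joint random variable of all shares handed to trustee $p$, writing $S_B=\{S_p\}_{p\in B}$ for the shares held by a set $B$. The two defining properties of a secure generalized SS scheme translate directly into entropy conditions: $H(S\mid S_B)=0$ for every $B\in\mathcal{B}$ (the secret is recoverable), and $I(S;S_B)=0$, equivalently $H(S\mid S_B)=H(S)$, for every $B\notin\mathcal{B}$ (no information leaks). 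Since the support of $S_p$ has size at least $2^{H(S_p)}$, it suffices to find an access structure that forces $H(S_p)>H(S)=\log_2 q$ for some $p$; that trustee's share domain then necessarily exceeds $|\mathcal{S}|$.

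First I would record the elementary ``a share is at least a secret'' lemma: if $A\notin\mathcal{B}$ but $A\cup\{p\}\in\mathcal{B}$, then
\begin{equation*}
H(S_p\mid S_A)\ \ge\ I(S;S_p\mid S_A)\ =\ H(S\mid S_A)-H(S\mid S_{A\cup\{p\}})\ =\ H(S)-0\ =\ H(S),
\end{equation*}
where the two cancelling values come from $A$ being unqualified and $A\cup\{p\}$ qualified. Next I would fix the canonical counterexample: the ``path'' structure $P_4$ on $\mathcal{P}=\{1,2,3,4\}$ whose minimal qualified sets are $\{1,2\},\{2,3\},\{3,4\}$, so that $\{1,3\},\{2,4\},\{1,4\}$ and all singletons are unqualified. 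Intuitively the two middle trustees are overloaded, and the target is the bound
\begin{equation*}
H(S_2)+H(S_3)\ \ge\ 3\,H(S),
\end{equation*}
from which $\max\{H(S_2),H(S_3)\}\ge \tfrac32 H(S)>H(S)$, proving the theorem.

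The core of the argument is deriving this inequality by combining the lemma along the path with the submodularity (Shannon) inequality $H(S_X)+H(S_Y)\ge H(S_{X\cup Y})+H(S_{X\cap Y})$ and the independence relations $I(S;S_1S_3)=I(S;S_2S_4)=I(S;S_1S_4)=0$ coming from the three unqualified pairs. Concretely, applying the lemma to the qualified completions $\{1,3\}\cup\{2\}$, $\{1,4\}\cup\{3\}$, and $\{2\}\cup\{3\}$ yields $H(S_2\mid S_1S_3)\ge H(S)$, $H(S_3\mid S_1S_4)\ge H(S)$, and $H(S_3\mid S_2)\ge H(S)$; the independence relations then let one splice these conditional gains together without double-counting, following the entropy method of Capocelli--De~Santis--Gargano--Vaccaro. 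I expect this splicing to be the main obstacle: the easy estimate $H(S_2 S_3)\ge 2H(S)$ only gives $\max\{H(S_2),H(S_3)\}\ge H(S)$ with equality not excluded, so by itself it does not establish strictness. Extracting the extra $H(S)$ that separates the information rate from $1$ is exactly where all three unqualified pairs and careful entropy bookkeeping are required. Once $H(S_2)+H(S_3)\ge 3H(S)$ is in hand the conclusion is immediate: some trustee (here $2$ or $3$) must be issued shares drawn from a domain of size at least $q^{3/2}>q=|\mathcal{S}|$.
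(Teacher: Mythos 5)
Your overall strategy is sound, and it is worth noting up front that it is \emph{not} the paper's route: the paper gives no proof of Theorem~\ref{thm:2} at all, deferring entirely to Benaloh--Leichter, whose argument for the same ``path'' access structure (minimal qualified sets $\{1,2\},\{2,3\},\{3,4\}$) is a direct combinatorial one --- assuming all shares of trustees $2$ and $3$ lie in domains of size $q$, the reconstruction maps of the qualified pairs are forced to be bijections coordinate-by-coordinate, and composing them lets an unqualified pair learn the secret. Your entropy-method route (due to Capocelli--De~Santis--Gargano--Vaccaro) is a genuinely different and stronger argument: it yields the quantitative bound $\max\{H(S_2),H(S_3)\}\ge\tfrac32 H(S)$, hence a share domain of size at least $q^{3/2}$ rather than merely $>q$. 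Your reduction from domain size to entropy, your choice of access structure, and your lemma $H(S_p\mid S_A)\ge H(S)$ for $A\notin\mathcal{B}$, $A\cup\{p\}\in\mathcal{B}$ are all correct.

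The gap is that the one step carrying all the weight, $H(S_2)+H(S_3)\ge 3H(S)$, is never derived: you correctly identify the ``splicing'' as the main obstacle and then stop there. Worse, the three instances of the lemma you propose to combine do not by themselves suffice --- chaining them through submodularity in the obvious ways (e.g.\ conditioning everything on $S_1$, $S_4$ and $S$) collapses back to $H(S_2S_3)\ge 2H(S)$, because over-conditioning destroys exactly the entropy being counted; one can check this against the tight scheme $S_1=r_1$, $S_2=(r_1+s,r_2)$, $S_3=r_2+s$, $S_4=r_2$. A chain that does close the argument is: $H(S_2)+H(S_3)\ge H(S_2S_3)=H(S)+H(S_2S_3\mid S)$ since $\{2,3\}$ is qualified; $H(S_2S_3\mid S)=H(S_2)+H(S_3\mid S_2S)$ since $\{2\}$ is unqualified; $H(S_2)\ge H(S_2\mid S_1)=H(S)+H(S_2\mid S_1S)$ since $\{1\}$ is unqualified and $\{1,2\}$ qualified; $H(S_3\mid S_2S)\ge H(S_3\mid S_1S_2S)$, so $H(S_2\mid S_1S)+H(S_3\mid S_1S_2S)=H(S_2S_3\mid S_1S)\ge H(S_3\mid S_1S)=H(S_3\mid S_1)$, the last equality because $\{1,3\}$ is unqualified and hence $S$ is independent of $(S_1,S_3)$; finally $H(S_3\mid S_1)\ge H(S_3\mid S_1S_4)\ge I(S;S_3\mid S_1S_4)=H(S)$ because $\{1,4\}$ is unqualified and $\{3,4\}$ qualified. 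Altogether $H(S_2S_3)\ge 3H(S)$, which is the inequality you need. Until some such chain is written out, your submission is a correct plan with the crux missing rather than a proof.
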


See~\cite{Benaloh1990} for the proofs of Theorem~\ref{thm:1} and Theorem~\ref{thm:2}.

\section{Advanced SS scheme} \label{sec:advss}

\subsection{General idea}
As it is noted in~\cite{Benaloh1990}, that we are unable to realize most monotone access structures with a standard SS scheme.   
However, one can modify the structures that can be realized efficiently, such that each trustee holds only one secret value, which we refer as a master key.
With the use of the master key, a trustee is able to calculate all required shares.

We define the scheme with the use of the roles as defined above.

Let us begin with an illustrative example.
Assume that a dealer wants to share a secret $s \in \mathbb{Z}_q$  between trustees Alice ($A$), Bob ($B$), Charlie ($C$), and David ($D$) according to the following access formula:
\begin{equation}
    ((X_{A,1}\land X_{B,1})\lor (X_{B,2}\land X_{C,1})\lor (X_{C,2}\land X_{D,1})),
\end{equation}
where $X_{p,j}$ is a Boolean variable that represents a trustee $p$ and appeared $j^{\rm th}$ time in the formula. 
Let us introduce an address for each variable as its position in the formula as follows:
\begin{equation}
    ((\stackrel{0}{X_{A,1}}\land \stackrel{1}{X_{B,1}})\lor (\stackrel{2}{X_{B,2}}\land \stackrel{3}{X_{C,1}})\lor (\stackrel{4}{X_{C,2}}\land \stackrel{5}{X_{D,1}}))
\end{equation}
Thus, $X_{A,1}.\address = 0$, $X_{B,1}.\address = 1$, $X_{B,2}.\address = 2$, and so on.

To share a secret, the dealer first gives each trustee $p \in \{A,B,C,D\}$ a value $\mk_{p}$, which is chosen uniformly at random from some domain $\mathcal{K}$.
Next we refer to $\mk_{p}$ as a master key belonging to a trustee $p$.

Let us then assume that the dealer and trustees have access to independent random oracles family $\{\RO_i : i \in \mathbb{Z}_q\}$ with an output domain in $\mathbb{Z}_q$. 
In order to generate a share that corresponds to a variable $X_{p,j}$, a trustee $p$ has to query the random oracle $\RO_{\mk_p}$ with $X_{p,j}.\address$.
For example, the shares for the defined access formula are computed in this way:
\begin{eqnarray}  
    s_{A,1}= \RO_{\mk_A}(X_{A,1}.\address) = \RO_{\mk_A}(0), \nonumber \\ 
    s_{B,1}=\RO_{\mk_B}(X_{B,1}.\address) = \RO_{\mk_B}(1),  \nonumber \\
    s_{B,2}=\RO_{\mk_B}(X_{B,2}.\address)=\RO_{\mk_B}(2), \nonumber \\
    s_{C,1}=\RO_{\mk_C}(X_{C,1}.\address)=\RO_{\mk_C}(3),  \\
    s_{C,2}=\RO_{\mk_C}(X_{C,2}.\address)=\RO_{\mk_C}(4), \nonumber \\
    s_{D,1}=\RO_{\mk_D}(X_{D,1}.\address)=\RO_{\mk_D}(5). \nonumber
\end{eqnarray}
Since each random oracle is independent, every share is a random value from $\mathbb{Z}_q$. 
As a result, a sum of shares, e.g. $s' := (s_{A,1}+s_{B,2}) ({\rm mod}\ q)$, is also a uniformly random variable from $\mathbb{Z}_q$.
To make it possible to reconstruct a secret $s$ by trustees $A$ and $B$, we add a publicly known value $y_1$ to this bracket, such that $ ( y_1 + s')  ({\rm mod}\ q) = s $.

Consequently, we modify our access formula into the following form:
\begin{equation}
    ((\stackrel{0}{X_{A,1}}\land \stackrel{1}{X_{B,1}} \land Y_1)\lor (\stackrel{2}{X_{B,2}}\land \stackrel{3}{X_{C,3}}\land Y_2)\lor (\stackrel{4}{X_{C,2}}\land \stackrel{5}{X_{D,1}} \land Y_3)),
\end{equation}
where $Y_i$ are Boolean variables that correspond to fictitious trustees, whose shares $y_i$ are considered to be publicly known to every actual trustee. The value of $y_i$ is computed in such a way that a reconstruction of secret becomes possible.
We note that $y_i$ is computed by the dealer, since he knows all the master keys.

Below we present our scheme in a more formal and efficient way.

\subsection{Formal Construction}
Let $n$ be a security parameter, $\mathbb{F}_q = \{f_k: k \in \mathcal{K} \}$  be a PRF family, where $q \geq 2^n$ and $f_k: \mathbb{Z}_q \rightarrow \mathbb{Z}_q$ with $|\mathcal{K}| = q$.
Here we chose $f_k: \mathcal{D} \rightarrow \mathcal{E}$ with $\mathcal{D} = \mathbb{Z}_q$, but one
can choose another domain. Note that $\mathcal{E}= \mathbb{Z}_q$, so we are able to sum the shares in $\mathbb{Z}_q$.
Let $H_q$ be a family of all functions $\mathbb{Z}_q\rightarrow\mathbb{Z}_q$. 
Let $l={\rm poly}(n)$ be the maximum size of monotone formula that we can use efficiently and let $l':=l/2$. 
Hereby the size of the monotone formula is the number of times that variables occur in the formula.

The roles for the scheme (\emph{dealer}, \emph{trustees}, and \emph{party}) are defined in the previous subsection. 

First, we define a modifying function $g_s(\phi)$, where $\phi$ is an access formula, whose size is less than $l'+1$ and it is written in $j$-notation, and $s \in \mathbb{Z}_q$.
Let $X_{p,j}$ be a variable, which represents a trustee $p$ and it is appeared $j^{\rm th}$ time in the formula. 
Let $X_{p,j}.\address$ represents the position of the variable in $\phi$.
Let $\mk_p \in \mathcal{K}$ be the value of $p$'s master key. 
We denote $Y_i$ as a variable that corresponds to a fictitious trustee and $y_i$ as the value of his share. 
We use $\phi_i$ as subformula. 
Since every formula can be written in the following form:
\begin{equation}
    \circ(\phi_1, \phi_2, \ldots , \phi_j, X_{p_1,k_1}, X_{p_2,k_2},\ldots X_{p_t,k_t}), 
\end{equation}
where $\circ$ stands for either $\land$ or $\lor$.

Let is introduce a global counter $\alpha$, which is initialized with $1$.
There are three separate cases to look at:
\begin{itemize}
    \item $g_s(X_{p_1,k_1} \land \ldots \land X_{p_t,k_t}) = (X_{p_1,k_1} 
    \land \ldots  \land X_{p_t,k_t} \land Y_\alpha)$,
     \newline where $t\geq 1$ and $y_i = s - f_{\mk_{p_1}}(X_{p_1,k_1}.\address)
     - \ldots - f_{\mk_{p_t}}(X_{p_t,k_t}.\address) ({\rm mod}\ q)$, and the counter
      $\alpha$ is incremented $\alpha := \alpha +1$.

    \item $g_s(X_{p_1,k_1} \land \ldots \land X_{p_t,k_t} \land \phi_1 \land 
    \ldots \land \phi_j) = (X_{p_1,k_1} \land \ldots \land X_{p_t,k_t} \land 
    g_{s_1}(\phi_1) \land \ldots \land g_{s_j}(\phi_j))$, where $ j \geq 1$, $
    \phi_i = \lor(\cdot)$ with at least one operator, $s_i \stackrel{\$}{\leftarrow}
    \mathbb{Z}_q$ for $i\in \{1,\ldots, j-1\}$ and $s_j := s - f_{\mk_{p_1}}(X_{p_1,k_1}.\address) - \ldots - f_{\mk_{p_t}}(X_{p_t,k_t}.\address) - s_1 - \ldots - s_{j-1} \ ({\rm mod}~q)$.
    \item $g_s(X_{p_1,k_1} \lor \ldots \lor X_{p_t,k_t} \lor \phi_1 \lor \ldots \lor \phi_j)= (g_s(X_{p_1,k_1})  \lor \ldots \lor g_s(X_{p_t,k_t}) \lor g_s(\phi_1) \lor g_s(\phi_2) \lor \ldots \lor g_s(\phi_j))$.
\end{itemize}

Let us clarify that the address of a variable is the number of the position of that variable in the formula $\phi$ (conventionally, we count from left to right). 

Now we can describe our \emph{advanced SS scheme}.
To share a secret the dealer should follow these steps:
\begin{enumerate}
    \item Choose a secret $s \stackrel{\$}{\leftarrow} \mathbb{Z}_q$.
    \item Choose a master key for each trustee in the union $\mathcal{P}$ uniformly at random from $\mathcal{K}$ (for each $p \in \mathcal{P}: \mk_p \stackrel{\$}{\leftarrow } \mathcal{K}$).
    \item Choose a monotone formula $\phi$ of size less or equal to $l'$, which represents an access structure.
    \item Evaluate $\phi' =g_s(\phi)$.
    \item Publish $\phi'$, so that the values $y_i$ are available for everyone.
\end{enumerate}
To reconstruct a secret a party should follow these steps:
\begin{enumerate}
    \item Each trustee $p$ in the party has to evaluate their shares:
     \newline  $s_{p,j} = f_{{\rm mk}_p}(X_{p,j}.\address)$.
    \item Using the corresponding shares and public values $y_i$, a verified party can calculate the secret $s$ according to the way it is shared.
\end{enumerate}

\begin{defn} Given a set $\mathcal{P}$ and a monotone access structure $\mathcal{B}$ on $\mathcal{P}$, an \textit{advanced SS scheme} for $\mathcal{B}$ is a method of dividing a secret $s$ 
    into shares $s_{p,j}$ such that the following statements hold true:
    \begin{itemize}
        \item When $B \in \mathcal{B}$, the secret $s$ can be reconstructed from the shares $\underset{p\in B}{\cup} \underset{j}{\cup} s_{p,j}$ and public values $y_1, \ldots, y_t$.
        \item When $B \notin \mathcal{B}$, the secret $s$ can be reconstructed only with a negligible probability from the shares $\underset{p\in B}{\cup} \underset{j}{\cup} s_{p,j}$ and public values $y_1, \ldots, y_t$.
    \end{itemize}
\end{defn}

\subsection{Security proof}
Here we introduce a notion of the security model that is used for our scheme, which is similar to the Selective-Id model~\cite{Canetti2003,Canetti2004,Boneh2004}.

\begin{defn}[Selective-Id model for advanced SS scheme]
    The following procedure is called Selective-Id model for advanced SS scheme.
    \begin{description}[align=left]
        \item [Init:] The adversary chooses an access structure $\mathcal{B}$ with a corresponding formula $\phi$ and gives it to the challenger.
        \item [Phase 1:] The adversary declares the set of trustees $\gamma$, which does not satisfy the formula $\phi$ and obtains master keys of trustees from $\gamma$ from the challenger.
        \item [Challenge:] The adversary submits two secrets $s_0$ and $s_1$. The challenger flips a fair coin $b$ and shares the secret $s_b$.
        \item [Phase 2:]The challenger gives to the adversary $\phi' = g_{s_b}(\phi)$ and corresponding values $y_1, \ldots, y_j$.
        \item [Guess:] The adversary outputs a guess $b'$ of $b$.
    \end{description} 
    The advantage of an adversary in this game is defined as $|\Pr[b'=b] - \frac{1}{2}|$.
\end{defn}
\begin{theorem}\label{thm:3}
    Consider the advanced SS scheme for a set of parties $\mathcal{P}$ based on  PRF family $\mathbb{F}_q =\{f_k:\mathbb{Z}_q \rightarrow \mathbb{Z}_q \}_{k\in \mathcal{K}}$ with $|\mathcal{K}| = q$.
    The advantage $\varepsilon'$ in the Selective-Id model of any classical adversary $\mathcal{A}$ that runs in time $\xi'$ satisfies the inequality
        $\varepsilon' \leq {\rm InSec^{PRF}}(\mathbb{F}_q, \xi)  \cdot |\mathcal{P}|$,
        where $\xi'\approx  \xi$ assuming that time needed for sampling no more than
         $|\mathcal{P}| +l'$ random variables is negligible, 
         where $l'$ is the maximum size of the formula which can be efficiently processed by the advanced SS scheme.
\end{theorem}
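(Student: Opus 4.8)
The plan is to reduce Selective-Id security of the advanced SS scheme to the PRF security of $\mathbb{F}_q$ by a hybrid argument over the trustees, closing the chain with the information-theoretic security of the standard scheme from Theorem~\ref{thm:1}. After the adversary commits to $\phi$ in the Init phase and declares the unauthorized set $\gamma$ in Phase~1, I would enumerate the trustees \emph{outside} $\gamma$ as $p_1,\ldots,p_N$ with $N=|\mathcal{P}\setminus\gamma|\le|\mathcal{P}|$ and define hybrid games $H_0,\ldots,H_N$. In $H_i$ the shares of $p_1,\ldots,p_i$ are produced by independently sampled truly random functions $h\rand H_q$, while $p_{i+1},\ldots,p_N$ and every trustee of $\gamma$ still use their PRFs $f_{\mk_p}$; thus $H_0$ is the real game and $H_N$ is fully idealized. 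Writing $P_i=\Pr[b'=b\mid H_i]$, the advantage telescopes as $\varepsilon'=|P_0-1/2|\le\sum_{i=0}^{N-1}|P_i-P_{i+1}|+|P_N-1/2|$.

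For each consecutive pair I would build a PRF distinguisher $\mathcal{D}$ that, given a challenge oracle $\mathcal{O}$ (either some $f_k$ or a random $h$), simulates $H_i$/$H_{i+1}$ to $\mathcal{A}$: it answers $p_1,\ldots,p_i$ with lazily sampled random functions, uses $\mathcal{O}$ for the queries $X_{p_{i+1},j}.\address$ of the pivot trustee $p_{i+1}$, and uses freshly sampled keys for the remaining PRF trustees. Crucially $p_{i+1}\notin\gamma$, so $\mathcal{D}$ never has to reveal the master key underlying $\mathcal{O}$ when answering the Phase~1 key query, and it can compute all public values $y_\alpha$ from the simulated shares and $s_b$ exactly as $g_{s_b}$ prescribes. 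When $\mathcal{O}=f_k$ the view matches $H_i$ and when $\mathcal{O}=h$ it matches $H_{i+1}$, so $|P_i-P_{i+1}|\le{\rm InSec^{PRF}}(\mathbb{F}_q,\xi)$. Summing the $N\le|\mathcal{P}|$ steps yields the claimed factor, and the only overhead of $\mathcal{D}$ over $\mathcal{A}$ is sampling at most $|\mathcal{P}|$ keys together with at most $l'$ further random values (the lazily defined random-function outputs and the AND-split summands), which by assumption is negligible, giving $\xi'\approx\xi$.

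It then remains to show $|P_N-1/2|=0$, i.e. that the idealized game leaks nothing about $b$, and this is the main obstacle, since the fix-up values $y_\alpha$ are published and one must verify they do not betray $s_b$ to the unauthorized $\gamma$. I would argue that with truly random functions every leaf share $s_{p,j}$ of a non-$\gamma$ trustee is an independent uniform element of $\mathbb{Z}_q$ (distinct variable occurrences carry distinct addresses), so the idealized advanced scheme on $g_{s_b}(\phi)$ coincides exactly with the standard scheme $\$(s_b,\cdot)$ on the augmented formula, with each fictitious $Y_\alpha$ treated as a trustee whose share happens to be public. Publishing the $y_\alpha$ is equivalent to adjoining every $Y_\alpha$ to the party; since each augmented conjunction $X_{p_1,k_1}\land\cdots\land X_{p_t,k_t}\land Y_\alpha$ is satisfied exactly when $\{p_1,\ldots,p_t\}$ is present, the set $\gamma\cup\{Y_\alpha\}$ remains unauthorized whenever $\gamma$ is. Theorem~\ref{thm:1} then guarantees that this unauthorized view, namely the known shares of $\gamma$ together with all public $y_\alpha$, is information-theoretically independent of the shared secret, so $P_N=1/2$ and $\varepsilon'\le|\mathcal{P}|\cdot{\rm InSec^{PRF}}(\mathbb{F}_q,\xi)$ as required.
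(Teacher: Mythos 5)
Your proposal is correct and follows essentially the same route as the paper: replace the PRFs of the trustees outside $\gamma$ by truly random functions via a hybrid over trustees (the paper packages this as a $|\mathcal{P}|$-PRF family game followed by the standard hybrid reduction to a single PRF instance), and close the fully idealized final game using the information-theoretic security of the standard scheme from Theorem~\ref{thm:1}. Your explicit check that publishing the $y_\alpha$ amounts to adjoining fictitious trustees that leave $\gamma$ unauthorized is the same observation the paper makes at the opening of its proof.
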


\begin{proof}

First of all, one can easily notice that the reconstruction of the secret happens the same way as in the standard SS scheme. 
We also note that if $B \notin \mathcal{B}$ (i.e. $B$ does not satisfy the formula $\phi$), then $ B\cup(\underset{i}{\cup} Y_i) $ does not satisfy the access structure defined by $\phi' = g_s(\phi)$ as $X_{p,j}\land 1 = X_{p,j}$. 
    
Consider, a modification of the advanced SS scheme (modified advanced SS scheme), where PRF family $\mathbb{F}_q$ is replaced with a set of random oracles. 
One can see that this scheme is exactly the standard SS scheme based on formula $\phi'=g_s(\phi)$. 
So there is no chance for an adversary to compute the secret, which possesses the shares from $B \notin \mathcal{B}$.

Now suppose that there exists a probabilistic polynomial time adversary $\Av$, which has an advantage $\varepsilon'$ in Selective-Id model for advanced SS scheme.
Without loss of generality, we assume that it's probability of guessing a correct value is $\Pr[b'=b]=1/2 + \varepsilon'$.
Then we show that it is possible to distinguish the PRF family $\mathbb{F}_q$ from truly random function family with a probability at least $\varepsilon'/|\mathcal{P}|$. 
To show this we construct an oracle machine $\mathcal{M}^{\Av}$ that has an advantage $\varepsilon'$ in $|\mathcal{P}|$-pseudorandom function family game (see Algorithm~\ref{alg1}).
Let us calculate the probabilities to obtain $v'=0$ and $v'=1$ ($v'$ is defined in Algorithm~\ref{alg1}).

\begin{algorithm}\label{alg1}
        \DontPrintSemicolon
        \SetKwInOut{Input}{Input}\SetKwInOut{Output}{Output}
        \Input{Security parameter $n$, function family $\mathbb{F}_q$, $|\mathcal{P}|$-pseudorandom function family challenge $\Omega = \{\omega_{p_1}, \ldots, \omega_{p_N}\}$, where $\{p_1, \ldots, p_{N}\} = \mathcal{P}$.}
        \Output{A guess $v'$.}
        The adversary $\mathcal{A}$ declares an access structure, a corresponding formula $\phi$, and a set of trustees $\gamma$, which does not satisfy the formula $\phi$.\;
        $\mathcal{A}$ queries the master keys of trustees from $\gamma$.\;
        Generate a master key uniformly at random for each trustee in $\gamma$ and response to the adversary with those keys.\;
        The adversary submits two secrets $s_0$ and $s_1$.\;
        Flip a fair coin $b$ and share the secret $s_b$ according to the \textit{advanced SS scheme}, but instead of generating master keys for trustees in $\mathcal{P} \setminus \gamma $ and calculating the shares with $f_k \in \mathbb{F}_q$, 
        use an oracle $\omega_p$ from $\Omega$ for trustee $p \in \mathcal{P} \setminus \gamma$  and calculate the shares as $s_{p.j}=\omega_p(X_{p.j}.\address)$. We call this modification $g'_s(\phi)$.\;
        Give to the adversary $\phi' = g'_{s_b}(\phi)$ and corresponding values $y_1, \ldots, y_j$.\;
        The adversary outputs a guess $b'$ of $b$.\;
        \eIf{$b' = b$}{return $v'=1$}{return $v'=0$}
        \caption{$\mathcal{M}^{\Av}$}
\end{algorithm}

Suppose that the challenge $\Omega$ is initialised with functions from the family $\mathbb{F}_q$.
In this case, the situation for the adversary is completely the same as in the case of the (non-modified) advanced SS scheme. 
Therefore, the adversary correctly guesses the value $b'$ with an advantage $\varepsilon'$ or what is the same with probability $\frac{1}{2}+\varepsilon'$. 
    
If the challenge $\Omega$ is initialized with functions from the family $H_q$, then the shares of the trustees from $\mathcal{P} \setminus \gamma$ are chosen uniformly at random. 
And the situation is the same as in the \textit{standard SS scheme}. 
Since $\gamma$ does not satisfy the formula, it is required to obtain at least one more share to get the secret, but all the remaining shares are chosen uniformly at random. 
Therefore, according to the Theorem~\ref{thm:1} the adversary has no information about the secret in this situation. 
Thus, in this case the adversary can only randomly guess the value $b$, so $b'$ is correctly guessed with a probability $\frac{1}{2}$.
    
Let $v=0$ corresponds to the challenge $\Omega$ initialized with functions from the family $H_q$ and $v=1$ to the challenge $\Omega$ initialized with functions from the pseudorandom function family.
Then the overall advantage in the $|\mathcal{P}|$-pseudorandom game is $|\Pr[v'=1|v=0] - \Pr[v'=1|v=1]| = | \frac{1}{2} - (\frac{1}{2} + \varepsilon') | = \varepsilon'$.

By the hybrid argument~\cite{Goldreich2001} we can distinguish a pseudorandom function family with probability $\varepsilon'/(|\mathcal{P}|)$.
In order to apply the hybrid argument consider two distributions, 
\begin{equation}
	D_1 = \{D_{1.i}=f_k: k \stackrel{\$}{\leftarrow}\mathcal{K} ,\ 1\leq i\leq |\mathcal{P}|, f_k\in\mathbb{F}_{q}\},
\end{equation}
and
\begin{equation}
	D_2 = \{D_{2.i}=h \stackrel{\$}{\leftarrow}H_q : 1\leq i\leq |\mathcal{P}|\}. 
\end{equation}
Define a sequence of hybrid distributions $D_1 = T_0,\ T_1, \ldots,\ T_{|\mathcal{P}|} = D_2$, where
$T_i = \{T_{i.j} = h \stackrel{\$}{\leftarrow}H_q: 1\leq j \leq i\} \cup \{T_{i.j}=f_k:k \stackrel{\$}{\leftarrow}\mathcal{K} , \ i < j \leq |\mathcal{P}|, f_k\in\mathbb{F}_{q}\}.$ So we have 
$\Adv_{D_1,D_2}(\mathcal{M}^{\Av})=\varepsilon'$. 
Let us remind that
\begin{multline}
	\Adv_{T_i,T_{i+1}}(\mathcal{M}^{\Av}) = |\Pr[x\stackrel{\$}{\leftarrow} T_i: \mathcal{M}^{\Av}(x)=1] -\\- \Pr[x\stackrel{\$}{\leftarrow} T_{i+1}: \mathcal{M}^{\Av}(x)=1]|
\end{multline}
By the triangle inequality, it is clear that 
\begin{equation*}
	\Adv_{D_1,D_2}(\mathcal{M}^{ {\mathcal{A}}}) \leq \sum_{i=0}^{|\mathcal{P}|-1} \Adv_{T_i,T_{i+1}}(\mathcal{M}^{ {\mathcal{A}}})
\end{equation*}
Thus, there exists some $\eta$, such that $0 \leq \eta < |\mathcal{P}|$ and
\begin{equation}
	\Adv_{T_{\eta},T_{\eta+1}}(\mathcal{M}^{ {\mathcal{A}}}) \geq \Adv_{D_1,D_2}(\mathcal{M}^{ {\mathcal{A}}})/|\mathcal{P}| = \varepsilon'/ |\mathcal{P}|.
\end{equation}
Suppose that we have a sample $\omega \stackrel{\$}{\leftarrow} \mathbb{F}_q$ or $\omega \stackrel{\$}{\leftarrow} H_q$. 
Let us construct a distribution $T'=\{T_i \stackrel{\$}{\leftarrow} H_q: 1\leq i \leq \eta\}\cup \{T_{\eta+1} = \omega\} \cup \{T_i \leftarrow f_k: k\stackrel{\$}{\leftarrow} \mathcal{K}, \ \eta+1<i\leq|\mathcal{P}|, f_k\in\mathbb{F}_{q}\}$.
If $\omega \stackrel{\$}{\leftarrow} \mathbb{F}_q$ then $T'$ is distributed the same as $T_{\eta}$, otherwise it is distributed as $T_{\eta+1}$. 
Thus, we can distinguish samples from $\mathbb{F}_q$ and $H_q$ with probability $\varepsilon'/|\mathcal{P}|$.

Finally, we obtain: $\varepsilon' \leq {\rm InSec^{PRF}}(\mathbb{F}_q, \xi)  \cdot |\mathcal{P}|$,
where $\xi$ is a total time of running $\mathcal{M^A}$ plus initialization of an appropriate hybrid. Neglegting the time needed
for preparing data for $\mathcal{A}$ and the hybrid $T'$ we obtain $\xi \approx \xi'$.
\end{proof}

\section{Advanced ABE Scheme} \label{sec:advabe}
\subsection{Formal Construction}
Consider a group of users, where each user posses a list of attributes.
Let $\mathcal{P}$ be a set of all existing attributes.
Let us call a community a subgroup of users, who possess a particular attribute $p\in\mathcal{P}$. 
In what follows, we refer to the community $p$ as a subgroup of users that possess an attribute $p$.
We note that a user can belong to several communities if he has more than one attribute.

Let $n\in \mathbb{N}$ be a security parameter, $G$ be a multiplicative group of a prime order $q$, where $2^n < q < 2^{n+1}$ in which DDH assumption is considered to be true, 
$g$ is a generator of that group, 
$H_q$ is a family of all functions $\mathbb{Z}_q\rightarrow\mathbb{Z}_q$,
and $\mathbb{F}_q = \{f_k:\mathbb{Z}_q \rightarrow \mathbb{Z}_q\}_{k\in G}$ is a PRF family.
We construct the advanced ABE scheme based on the advanced SS scheme in the following form.

\begin{description}[align = left]
	\item [Setup:] Each community $p$ in the universe $\mathcal{P}$ generates their secret key ${\rm sk}_p \stackrel{\$}{\leftarrow}\mathbb{Z}_q$ and a correspong public key ${\rm pk}=g^{{\rm sk}_p}$. 
	Then the public key is shared among the whole group of users.
	So that the set of public keys $ PK =\{{\rm pk}_p=g^{{\rm sk}_p}: p \in \mathcal{P}\}$ is assumed to be known to every user in the group.
	\item [Encryption $(M,PK,\phi, \mathbb{F}_q)$:] 
	To encrypt a message $M \in \mathbb{Z}_q$ under public keys $ PK$ and formula $\phi$, which represents some monotone access structure, one generates $s \stackrel{\$}{\leftarrow} \mathbb{Z}_q, e \stackrel{\$}{\leftarrow} \mathbb{Z}_q$
	and computes the ciphertext in the following form $E=\{E'=M+ s ({\rm mod}\ q), g^e, \phi'=g_s(\phi), y_1, \ldots, y_t\}$, 
	where $g_s(\cdot), y_1, \ldots, y_t$ come from the advanced SS scheme based on  PRF family $\mathbb{F}_q$ and the corresponding master keys are calculated as ${\rm mk}_p = g^{{\rm sk}_p \cdot e}$.
	\item[Decryption $(E, SK, Attr, \mathbb{F}_q)$:] , where $SK$ is a set of all secret keys known to a concrete user and $Attr$ is a set of attributes he posseses.
	For each ${\rm sk}_p \in SK$, a user calculates the master key ${\rm mk}_p = (g^e)^{{\rm sk}_p}$.
	Then if $Attr$ satisfies the access structure, then the secret $s$ can be reconstructed using $MK = \{{\rm mk}_p\}$, $\phi'$ and $y_1, \ldots, y_t$.
	The message is obtained from $E'$ as $M=E'-s ({\rm mod}\ q)$.
\end{description}
\subsection{Security proof}
In order to provide a formal security analysis of the advanced ABE scheme, we introduce the following definition.

\begin{defn}[Attribute-based Selective-Set model]
    The following \newline procedure is called attribute-based Selective-Set model:
    \begin{description}
        \item [Init:] The adversary chooses an access structure and a corresponding formula $\phi$ and sends $\phi$  to the challenger.
        \item [Phase 1:] The adversary declares the set of communities $\gamma$, which does not satisfy the formula $\phi$ and obtains secret keys of communities from $\gamma$ from the challenger.
        \item [Challenge:] The adversary submits two secrets $s_0$ and $s_1$. The challenger flips a fair coin $b$ and encrypts $m \stackrel{\$}{\leftarrow} \mathbb{Z}_q$: $E'=m + s_b ({\rm mod}\ q)$. 
        \item [Phase 2:] The challenger gives to the adversary public keys of all communities and $E$, which is a ciphertext of $m$ generated according to the advanced ABE scheme.
        \item [Guess:] The adversary outputs a guess $b'$ of $b$.
    \end{description}
    The advantage of an adversary in this game is defined as $|\Pr[b'=b] - \frac{1}{2}|$.
\end{defn}

Below we prove that the security of our scheme in the attribute-based Selective-Set model reduces to the hardness of the DDH challenge and pseudorandomness of the function family.

\begin{theorem}\label{thm:4}
    Consider the advanced ABE scheme based on an  PRF family $\mathbb{F}_q$ and set of communities $\mathcal{P}$. The advantage $\varepsilon'$ in the the Attribute-based
    Selective-Set model game of any classical adversary $\mathcal{A}$ that runs in time $\xi'$ 
    satisfies the following inequality: 
    $\varepsilon' \leq {\rm InSec^{DDH}}(G, \xi) \cdot |\mathcal{P}| + \rm{InSec^{PRF}}(\mathbb{F}_q, \tilde{\xi})\cdot|\mathcal{P}|$.
    With $\xi \approx \xi' \approx \tilde{\xi}$ assuming that time required
     for sampling no more than $3|\mathcal{P}|+l'$  random variables is negligible,
      where $l'$ is the maximum size of the formula which can be efficiently processed by the advanced ABE scheme.
\end{theorem}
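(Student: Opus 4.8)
The idea is to peel the ABE scheme apart into its two cryptographic layers and bound each separately, since this is exactly the shape of the claimed bound. The master keys ${\rm mk}_p=g^{{\rm sk}_p\cdot e}$ are Diffie--Hellman products of the public data $({\rm pk}_p=g^{{\rm sk}_p},\,g^e)$, while everything else in the ciphertext ($E'=m+s_b$, $\phi'=g_{s_b}(\phi)$ and the public shares $y_1,\dots,y_t$) is an instance of the advanced SS scheme keyed by those master keys. I therefore argue by two game hops, starting from Game~$0$, the real Attribute-based Selective-Set game, in which $\mathcal{A}$ has advantage $\varepsilon'$.

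\textbf{Hop 1: hiding the master keys (DDH).} Let Game~$1$ be Game~$0$ with the master keys ${\rm mk}_p$ of every community $p\in\mathcal{P}\setminus\gamma$ --- precisely the communities whose secret keys $\mathcal{A}$ never receives --- replaced by uniformly random elements of $G=\mathcal{K}$. I would bound the gap by the $m$-DDH problem: a reduction takes the shared component $g^a$ of an $m$-DDH instance to be $g^e$, sets the public key of the $i$-th community outside $\gamma$ to the challenge component $g^{b_i}$ and its master key to the third component (which is $g^{a b_i}={\rm mk}_{p_i}$ in the real case and random in the other), and for the communities in $\gamma$ samples ${\rm sk}_p$ itself so that ${\rm pk}_p=g^{{\rm sk}_p}$ and ${\rm mk}_p=(g^a)^{{\rm sk}_p}$ are consistent and can be revealed in Phase~1. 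Finally $m$-DDH reduces to ordinary DDH by a hybrid over the $\le|\mathcal{P}|$ communities: in step $i$ one embeds a single instance $(g^a,g^b,g^c)$ as the $i$-th tuple and simulates every other tuple with the \emph{same} base as $(g^a,g^{b_j},(g^a)^{b_j})$ (real) or $(g^a,g^{b_j},g^{z_j})$ (random) using self-chosen exponents. This yields the term ${\rm InSec^{DDH}}(G,\xi)\cdot|\mathcal{P}|$.

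\textbf{Hop 2: hiding the secret (PRF).} In Game~$1$ the master keys outside $\gamma$ are uniform in $\mathcal{K}$, so the pair $(\phi',\,y_1,\dots,y_t)$ is exactly the output of the advanced SS scheme on the uniformly keyed family $\mathbb{F}_q$. I would then re-run the reduction behind Theorem~\ref{thm:3} to the $|\mathcal{P}|$-PRF game: when the oracles are genuine PRFs the view equals Game~$1$, and when they are truly random functions the shares of the missing community are uniform, so by Theorem~\ref{thm:1} the secret $s_b$ --- and hence $b$ --- is information-theoretically hidden. The only addition relative to Theorem~\ref{thm:3} is that the reduction now also samples $e$ and all secret keys in order to assemble $g^e$, the public keys and $E'$ around the challenge; these are independent of the PRF oracles, so the simulation is faithful. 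This bounds the advantage in Game~$1$ by ${\rm InSec^{PRF}}(\mathbb{F}_q,\tilde\xi)\cdot|\mathcal{P}|$. Summing the two hops gives
\[
\varepsilon'\le {\rm InSec^{DDH}}(G,\xi)\cdot|\mathcal{P}|+{\rm InSec^{PRF}}(\mathbb{F}_q,\tilde\xi)\cdot|\mathcal{P}|,
\]
and $\xi\approx\xi'\approx\tilde\xi$ because the reductions only add the sampling of at most $3|\mathcal{P}|+l'$ random values (the secret keys, $e$, the randomized master keys, and the internal shares $s_i$ of $g_s$), assumed negligible.

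\textbf{Main obstacle.} The delicate point is consistency between the two layers across the reductions. A naive black-box appeal to Theorem~\ref{thm:3} fails, because its challenger would hand back random $\gamma$-master keys for which no matching secret keys ${\rm sk}_p$ with $g^{{\rm sk}_p e}={\rm mk}_p$ can be produced without computing discrete logarithms. The fix, used in both hops, is to let the reduction itself choose $e$ (or read $g^e$ off the DDH instance) and the secret keys of $\gamma$, so that all revealed ${\rm sk}_p$, public keys and master keys stay mutually consistent; care is likewise needed in the $m$-DDH-to-DDH hybrid to keep the base $g^a=g^e$ fixed across every simulated tuple.
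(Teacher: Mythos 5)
Your proposal is correct and follows essentially the same route as the paper: the paper likewise first replaces the master keys of the unqueried communities by uniform values (bounding the gap via a $|\mathcal{P}|$-DDH reduction that reads $g^e$ off the shared component $g^a$, uses $\omega_{p.2},\omega_{p.3}$ as public/master keys outside $\gamma$, and self-generates the secret keys in $\gamma$, followed by the hybrid down to single DDH), and then bounds the residual advantage by invoking Theorem~\ref{thm:3}. Your explicit treatment of the consistency issue and of the fixed-base $m$-DDH-to-DDH hybrid is somewhat more detailed than the paper's, which handles these points implicitly, but the argument is the same.
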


\begin{proof}
First, suppose that the master keys are replaced with uniformly random keys.
In this case, let us denote the advantage in breaking the modified advanced ABE protocol in the attribute-based Selective-Set model as $\widetilde{\varepsilon}$.
If $(\varepsilon' - \widetilde{\varepsilon})$ is not negligible, then we can construct a machine that breaks $|\mathcal{P}|$-DDH challenge with an advantage of at least $(\varepsilon'-\widetilde{\varepsilon})$.

We assume that $\widetilde{\varepsilon} < \varepsilon'$, since we limit the value of $\widetilde{\varepsilon}$ by the pseudoradnomnes property and if $\varepsilon'$ is less than $\widetilde{\varepsilon}$ then we can limit them both.

Let us denote a $|\mathcal{P}|$-DDH challenge $\Omega = \{w_{p_1}, \ldots, w_{p_N}\}$, 
where$N =|\mathcal{P}|$, $\{p_1,\ldots,p_N\}=\mathcal{P}$ and $w_{p_i}$ is a tuple either $(g^a, g^{b_i}, g^{a\cdot b_i})$ or $(g^a, g^{b_i}, g^{z_i})$. We use $w_{i.j}$ to denote the $j^{\rm th}$ element of the tuple.
To prove the theorem, consider the following algorithm.

\begin{algorithm}\label{alg2}
        \DontPrintSemicolon
        \SetKwInOut{Input}{Input}\SetKwInOut{Output}{Output}
        \Input{Security parameter $n$, $|\mathcal{P}|$-DDH challenge $\Omega$.}
        \Output{A guess $v'$.}
        The adversary $\mathcal{A}$ chooses an access structure and a corresponding formula $\phi$ and sends it to the challenger.\;
        $\mathcal{A}$ declares the set of communities $\gamma$, which does not satisfy the formula $\phi$, whose secret keys he wishes to get and queries them.\;
        Generate a secret key for each community in $\gamma$ and response to the adversary with those keys.\;
        The adversary submits two secrets $s_0$ and $s_1$.\;
        Flip a fair coin $b$ and encrypt a message $m \stackrel{\$}{\leftarrow}\mathbb{Z}_q$ according to the advanced ABE scheme with $s=s_b$, 
        but instead of secret keys for communities in $\mathcal{P} \setminus \gamma $ use sample $\omega_p$ from $\Omega$ for community $p \in \mathcal{P} \setminus \gamma$. 
        Take $\omega_{p.2}$ as his public key and $\omega_{p.3}$ as his master key. 
        We call this modification $g'_s(F)$.\;
        Give to the adversary $E=\{E'=m+s ({\rm mod}\ q),\omega_{1,1},\phi' = g'_{s}(\phi),y_1, \ldots, y_j\}$.\;
        The adversary outputs a guess $b'$ of $b$.\;
        \eIf{$b' = b$}{return $v'=1$}{return $v'=0$}
        \caption{$\mathcal{M}^{\Av}$}
\end{algorithm}
    
    If $\omega_{p.3}$ is sampled uniformly at random ($v=0$), then the master keys are chosen uniformly at random. 
    Hence the adversary has no information about the master keys he did not query. 
    Remind that we denote the advantage of the adversary in this situation as $\widetilde{\varepsilon}$. 
    Otherwise $(v=1)$ the situation is the same as in the original ABE protocol.
    Thus, we have the overall advantage in the 
    $|\mathcal{P}|$-DDH game as follows: 
    
    \begin{multline}
        {\rm InSec^{|\mathcal{P}|-DDH}}(G, \xi_{\mathcal{P}}) \geq |\Pr[v'=1|v=0] - \Pr[v'=1|v=1]| =\\
        =  |(\frac{1}{2}+\widetilde{\varepsilon}) - (\frac{1}{2} +\varepsilon' ) | = \varepsilon'-\widetilde{\varepsilon},
    \end{multline}
    where $\xi_{\mathcal{P}}$ is a running time of Algorithm~\ref{alg2}. Neglegting the time for preparing data for $\mathcal{A}$
    we obtain $\xi_{\mathcal{P}} \approx \xi'$.

    In analogy to the proof of Theorem~\ref{thm:3}, one can see that due to the hybrid argument
    \begin{equation*}
        {\rm InSec^{DDH}}(G, \xi) \geq (\varepsilon' -\widetilde{\varepsilon})/|\mathcal{P}|,
    \end{equation*}
    where $\xi \approx \xi_{\mathcal{P}}$ neglegting the time, needed to prepare an appropriate hybrid.

    Finally, we limit the value of $\widetilde{\varepsilon}$. 
    Due to the fact the master keys are chosen uniformly at random, the security of such a scheme reduces to the security of the advanced SS scheme straightforwardly. 
    Therefore, according to Theorem~\ref{thm:3}: $\widetilde{\varepsilon} \leq \rm{InSec^{PRF}(\mathbb{F}_q,\tilde{\xi})}\cdot|\mathcal{P}|$, with $\tilde{\xi} \approx \xi_{\mathcal{P}}$.

    Thus, we arrive to the final result:
    \begin{equation*}
        \varepsilon' \leq {\rm InSec^{DDH}}(G,\xi) \cdot |\mathcal{P}| + \rm{InSec^{PRF}(\mathbb{F}_q,\tilde{\xi})}\cdot|\mathcal{P}|,
    \end{equation*}
with $\xi' \approx \xi \approx \tilde{\xi}$.

\end{proof}
\section{Efficiency estimation for advanced ABE scheme }\label{sec:efficiency}

Here we analyze the efficiency of the proposed advanced ABE scheme in terms of sizes of ciphertext, public parameters, and private key,
and the computation time for decryption and encryption.

Consider a ciphertext $E=\{E' = m + s ({\rm mod}\ q), g^e, \phi'=g_s(\phi), y_1, \ldots, y_j\}$ and a plaintext $ PT=\{m, \phi\}$. 
We note that it is required to publish the rules of the access structure, hence we assume that the plaintext is accomplished by the formula $\phi$.
One can see that $\phi'$ is no more than twice bigger than $\phi$.
This is due to the fact that the number of additional Boolean variables corresponded to fictitious trustees (communities)
can not exceed the number of Boolean variables corresponded to the actual trustees (communities).
We then note that $\phi$ and $\phi'$ make a major contribution into the size of $ PT$ and $E$.
Hence, the overhead of the ciphertext compared to plaintext is of the size linear in the size of the formula $\phi$.

The public parameters of the system are of size linear in the number of existing attributes.
The private key of the community consists of a single value from $\mathbb{Z}_q$.

The encryption procedure generates two random values, performs one addition in $\mathbb{Z}_q$ and one exponentiation in the group $G$, $l$ calls to functions from $\mathbb{F}_q$, where $l$ denotes the size of the formula $\phi$.
The modification of the formula $\phi$ into $\phi'$ is performed in the linear time with the usage of syntax tree. 

Thus, the amount of the communities in the scheme is $|\mathcal{P}|$.
The decryption procedure needs to perform at most $|\mathcal{P}|$ exponentiations, $l'$  sums and pseudorandom function calls, where $l'$ is the size of formula $\phi'$. 
Finally, one subtraction is required.

\section{Conclusion} \label{sec:conc}
Here we summarize the main results of our work.
First, we have presented the modification of the SS scheme, which allows a user to store only one value to calculate the corresponding shares. 
Based on this modification, we have proposed the advanced ABE protocol. 
We have provided the security and efficiency analysis of the proposed scheme.

One of the most significant impacts of this paper is rejection of bilinear mappings, which evidently increases the
efficiency of the proposed scheme and allows to dinamicaly add new attributes.

One can see that the proposed ABE scheme is not collusion resistant as well as some other ABE schemes (e.g. see~\cite{Kapaida2007}).
We note, that all known collusion resistant schemes are based on using of trusted centers which are absent in our scheme. 

There are several ways to improve the proposed scheme. 
The first one is based on adding new logical elements, e.g. threshold, so that the formula $\phi$ can be constructed more efficiently. 
The second question is related to modification of this protocol with respect to the use of other key exchange schemes. 

\section*{Acknowledgments} 
This work is supported by Russian Foundation for Basic Research (18-37-20033). 
A.A.C. is supported by Russian Science Foundation (17-11-01377).

\end{document}